\newcommand{\blue}[1]{\textcolor{blue}{#1}}
\newcommand{\magenta}[1]{\textcolor{magenta}{#1}}
\def\ShowComment{True}
\def\anisur#1{\marginpar{$\leftarrow$\fbox{A}}\footnote{$\Rightarrow$~{\sf #1 \magenta{--Anisur}}}}
\def\anisur#1{}
\def\billy#1{\marginpar{$\leftarrow$\fbox{B}}\footnote{$\Rightarrow$~{\sf #1 \blue{--Billy}}}}
\def\billy#1{}
\begin{document}
\title{Dispersion of Mobile Robots: The Power of Randomness}

\author{Anisur Rahaman Molla\inst{1}\thanks{Research supported in part by DST Inspire Faculty research grant DST/INSPIRE/04/2015/002801. ORCID ID: 0000-0002-1537-3462}\Letter
\and William K. Moses Jr.\inst{2}\thanks{Research supported in part by a grant of his postdoctoral fellowship hosts from the Israeli Ministry of Science. ORCID ID: 0000-0002-4533-7593}
}

\authorrunning{A.\,R. Molla and W.\,K. Moses Jr.}

\institute{Cryptology and Security Research Unit, Indian Statistical Institute, Kolkata 700108, India. \email{molla@isical.ac.in}
\and Faculty of Industrial Engineering and Management, Technion - Israel Institute of Technology, Haifa, Israel. \email{wkmjr3@gmail.com}
}

\maketitle

\begin{abstract}

We consider cooperation among insects, modeled as cooperation between mobile robots on a graph. Within this setting, we consider the problem of mobile robot dispersion on graphs. The study of mobile robots on a graph is an interesting paradigm with many interesting problems and applications. The problem of dispersion in this context, introduced by Augustine and Moses Jr. \cite{AM18}, asks that $n$ robots, initially placed arbitrarily on an $n$ node graph, work together to quickly reach a configuration with exactly one robot at each node. Previous work on this problem has looked at the trade-off between the time to achieve dispersion and the amount of memory required by each robot. However, the trade-off was analyzed for {\em deterministic algorithms} and the minimum memory required to achieve dispersion was found to be $\Omega(\log n)$ bits at each robot. In this paper, we show that by harnessing the power of {\em randomness}, one can achieve dispersion with $O(\log \Delta)$ bits of memory at each robot, where $\Delta$ is the maximum degree of the graph. Furthermore, we show a matching lower bound of $\Omega(\log \Delta)$ bits for any {\em randomized algorithm} to solve dispersion.

We further extend the problem to a general $k$-dispersion problem where $k> n$ robots need to disperse over $n$ nodes such that at most $\lceil k/n \rceil$ robots are at each node in the final configuration. 

\keywords{Nature-inspired computing,
	Mobile robots,
	Dispersion,
	Collective robot exploration,
	Scattering,
	Uniform deployment,
	Load balancing,
	Distributed algorithms,
	Randomized algorithms
	}
\end{abstract}

\vspace{-0.3cm}
\section{Introduction}
\label{sec:intro}

\vspace{-0.2cm}
\subsection{Background \& Motivation}
The mobile robots paradigm has been used to study many types of systems, including those where simple insects cooperate with each other to accomplish some goal. These robots typically need to work together to solve some common problem such as shape formation or exploration of the environment or gathering at some common point. One of the primary motivations of this type of research is to understand how to use resource-limited robots to achieve some large task in a distributed manner.

Typically the environment that serves as a backdrop to these problems is a either a finite plane or a connected graph. However, a graph can just be thought of as a discretization of the space of a finite plane or in fact three dimensional space. Thus, using graphs as an environment allows, in some sense, for a more general study of a given problem. 

The problem of dispersion on graphs was recently introduced by Augustine and Moses Jr.~\cite{AM18}. The initial version of the problem asks that $n$ robots that are initially arbitrarily placed on a graph should work together to reach a final configuration such that there is exactly one robot on each node. We study this problem and the more general version of it where $k$ robots (for any $k$) are initially arbitrarily placed and must reach a configurations such that at most $\lceil k/n \rceil$ robots are present on any given node. The study of dispersion is interesting and has practical applications to any problem where the cost of several robots sharing the same resource (node) far outweighs the cost of a robot finding a new resources (moving on the graph). One such example is when multiple electric cars must find a recharge station in an area where recharge stations are located close by. The time to charge the vehicle may be in the order of hours while the time to find another station would be in the order of minutes. Further, if the vehicles are ``smart" and communicate with each other to exchange information about what stations are free or not, this problem is exactly modeled as dispersion. The study of dispersion is also interesting as it relates to the related problems of scattering on graphs, multi-robot exploration, and load balancing on graphs. Scattering on graphs asks that $k\leq n$ spread themselves out in an equidistance manner on symmetric graphs like rings or grids. This is just dispersion with an extra constraint of equi-spacing. Multi-robot exploration asks that $k$ robots starting at the same node work together to visit each node of the graph as quickly as possible. It is clear that any solution to dispersion solves this problem. Finally, load balancing on graphs asks that nodes send and receive loads and evenly distribute these loads among themselves. Dispersion can be seen as flipping this model by having the loads (i.e., robots) move around and distribute themselves evenly among the nodes. The techniques used to solve load balancing in graphs are quite different from those used to solve problems in mobile robots and by studying dispersion, our hope is to build a bridge between the two areas for cross-pollination of ideas and techniques.

As mentioned earlier, one of the key aspects of mobile robots is that we are solving large tasks in a distributed manner with \textit{resource limited mobile robots}. In previous work, the study of memory of robots and time to achieve dispersion was of great interest. This paper furthers that study and shows that the introduction of randomness in a novel way allows robots to achieve dispersion using much less memory than previously shown.

\vspace{-.3cm}
\subsection{Our Results}
Throughout this paper, we study the trade-off between memory required by robots and the time it takes to achieve dispersion of $n$ robots on different types of graphs with $n$ nodes and $m$ edges. We denote the diameter of these graphs by $D$ and maximum degree of any node of the graph by $\Delta$. We present algorithms for increasingly general types of graphs that utilize randomness to allow robots, with typically $O(\log \Delta)$ bits of memory\footnote{Note that all $\log$'s that appear in this paper are to the base $2$.}, to achieve dispersion. This is a substantial improvement over past algorithms which, while deterministic, required robots to have $\Omega(\log n)$ bits of memory each\footnote{Notice that our algorithms require only $O(1)$ bits memory at each robot on paths, rings, grids and any constant degree graphs, whereas the previous deterministic algorithms require $O(\log n)$ bits.}. We also show a lower bound on the memory requirement that any randomized algorithm requires $\Omega(\log \Delta)$ bits to achieve dispersion, assuming all robots have the same amount of memory.

When we consider a \textit{rooted graph} of a certain type, it implies that the topology of the graph is of that type and all robots start at one node, called the root, of the graph. We initially describe our algorithms for dispersion of $n$ robots on $n$ node graphs and subsequently generalize them to dispersion of any $k$ robots on $n$ node graphs. We assume that robots do not know the values of $n$, $m$, $k$, $\Delta$, or $D$. However, in several instances, our algorithms require robots to have memory proportionate to either parameter $\Delta$ or $D$. This means that whatever memory supplied to the robot should be enough to satisfy the requirement, but the explicit knowledge of the parameter itself is not needed. 
Our upper bound results for dispersion of $n$ robots on $n$ node graphs are summarized in Table~\ref{table:results}.

We first describe a primitive, \emph{Local-Leader-Election}, that can be used by robots with access to randomness to choose one robot to settle down at a given node. This  allows us to side-step the requirement of $\Omega(\log (k/n))$ bits of memory required by each robot for a unique label if we want robots to deterministically choose a robot to settle down at each node.

We then proceed to show how a simple algorithm for rooted rings, \emph{Rooted-Ring}, that requires robots to have $O(\log \Delta)$ bits of memory and achieves dispersion in $O(n)$ rounds. This serves as a warm-up and allows readers to internalize the way we use \emph{Local-Leader-Election} and how these sorts of algorithms (with reduced memory) need to operate.

We then develop the algorithm \emph{Rooted-Tree} for rooted trees that requires robots to have $O(\log \Delta)$ bits of memory and achieves dispersion in $O(n)$. This algorithm contains the key ideas for our algorithm on general rooted graphs.

We then present two algorithms to achieve dispersion on rooted graphs, \emph{Rooted-Graph-LogDelta-LogD} and \emph{Rooted-Graph-Delta}, which require robots to have $O(\max \lbrace \log \Delta, \log D \rbrace)$ and $O(\Delta)$ bits of memory respectively and both algorithms achieve dispersion in $O(m)$ rounds. Both algorithms are extensions of the \emph{Rooted-Tree} algorithm and use different amounts of memory to handle the issue of dealing with cycles that may arise in the graph. We provide these two algorithms with contrasting memory requirements so that if an algorithm designer a priori knows which of the two memory requirements is less, they can program robots to use that algorithm.

Finally, we present an algorithm \emph{Arbitrary-Graph}, which works on arbitrary graphs, and allows robots with $O(\log \Delta)$ bits of memory to achieve dispersion in the cover time of the graph with high probability. The algorithm is a Las Vegas type randomized algorithm in that robots will eventually achieve dispersion, but the exact running time is not fixed and but bounded with high probability. The ``cost" of having robots use less memory is that we require robots to stay active after they settle down. Namely, each robot runs the algorithm until it settles down and then must stay active to inform other robots that come to the node that the node is settled. In this sense, the algorithm is non-terminating. Contrast this against other algorithms which allow robots to settle down and need not be active after certain conditions are met.

After presenting the above algorithms and analyzing them for dispersion of $n$ robots on $n$ nodes, we then show how to generalize them to achieve dispersion of $k$ robots on $n$ nodes, where $k$ can be any positive integer value.

We then present our lower bound of $\Omega(\log \Delta)$ bits of memory needed by each robot to achieve dispersion when randomness is allowed.

\begin{table*}[ht]
	\caption{Upper bound results of dispersion of $n$ robots on an $n$ node graph (with $\Delta$ maximum degree and diameter $D$) for different types of graphs along with the memory requirement of robot.} 
	\centering 
		\resizebox{1.0\columnwidth}{!}{%
	\begin{tabular}{|c|c|c|c|c|}
		\hline
		Serial & Type of Graph & Memory Requirement  & Algorithm Name & Time Until \\
		No. & & of Each Robot &  &  Dispersion Achieved  \\
		\hline
		\hline
		1. & Rooted Ring & $O(1)$ bits & Rooted-Ring & $O(n)$ rounds \\
		\hline
		2. & Rooted Tree & $O(\log \Delta)$ bits & Rooted-Tree & $O(n)$ rounds  \\
		\hline
		3. & Rooted Graph & $O(\max \lbrace \log \Delta, \log D \rbrace)$ bits & Rooted-Graph-LogDelta-LogD & $O(m)$ rounds \\
		\hline
		4. & Rooted Graph & $O(\Delta)$ bits & Rooted-Graph-Delta & $O(m)$ rounds  \\
		\hline
		5. & Arbitrary Graph & $O(\log \Delta)$ bits & *Arbitrary-Graph & Cover time of graph \\
		\hline
		\multicolumn{5}{|l|}{*Arbitrary-Graph: In this algorithm, robots do not terminate execution of the algorithm, unlike the other algorithms.}\\
		\multicolumn{5}{|l|}{The cover time of a graph lies in the range between $\Omega(n \log n)$ and $O(mn)$.}\\
		\hline
	\end{tabular}
		}
	\label{table:results}
\end{table*}


\vspace{-.3cm}
\subsection{Related Work}


The problem of dispersion of mobile robots on a graph was introduced recently by Augustine and Moses Jr.~\cite{AM18} and studied in different graph classes. In the full version \cite{AM17}, the authors rectified and improved some of their dispersion algorithms. Improvements and rectifications were also independently performed by Kshemkalyani and Ali~\cite{KA18}. Both papers focused on the trade-off between time complexity and memory requirement of robots to solve dispersion deterministically. Our results improve over the previous works \cite{AM17,KA18} in terms of memory requirement with the help of randomness. In particular, our randomized algorithms reduce the memory requirement from $O(\log n)$ bits to $O(\log \Delta)$ bits and the time complexity remains same or is faster (in some cases). While the algorithms in \cite{AM17,AM18} chiefly rely on a timer to signal termination of the algorithm and as such require $\Omega(\log n)$ bits of memory, our algorithms are more event oriented and robots terminate when the termination condition is triggered.   

Dispersion is closely connected to graph exploration by robots; a well-studied problem in the context of mobile robots. In the graph exploration problem, $k$ robots are initially located at a node and the goal is to have the robots collectively visit all nodes in the graph. A number of papers have worked on this problem, however, most of the works are in specific graph classes, such as rings~\cite{DALALLPF15,LABMTB10}, 
 trees~\cite{DFKP02,DKHS06,OS14}, 
  and grids~\cite{BBMR08,DAFPS12}. 
  Several papers consider exploration on general graphs~\cite{BCGX11,BVX14,CFIKP08,DDKPU15,FIPPP05}. However, the model assumptions or the goal of these papers are different from ours and they may produce inefficient solutions to dispersion. For example, the papers  close to our model~\cite{CFIKP08,FIPPP05,R08} only focus on minimizing memory of the robots and as a result the time complexity of their exploration algorithms is very high. Fraigniaud et al.~\cite{FIPPP05} shows that a robot with $\Theta(D\log \Delta)$ bits of memory can explore an anonymous graph, but may take time $O(\Delta^{D+1})$. Cohen et al.~\cite{CFIKP08} considers the model where the nodes also have memory. Then with some initial preprocessing, they solve exploration with less memory bits, but the exploration takes $O(\Delta^{10}m)$ time. Further, the exploration algorithm of Cohen et al. with $O(1)$ bits memory at each node cannot solve dispersion immediately. Diks et al.~\cite{DFKP02} shows that exploration in a tree is possible with $O(\log^2 n)$ bits of memory. Amb{\"u}hl et. al.~\cite{AGPRZ11} improved this memory bound to $O(\log n)$ bits. 
Dynia et al.~\cite{DKHS06} and Ortolf et al.~\cite{OS14} present optimal-time rooted tree exploration algorithms with $k$ robots, but they assume unlimited memory of robots.  Our paper focuses on the trade-off between running time and memory requirement to solve dispersion.     




Another similar problem to dispersion is scattering or uniform deployment of $k$ robots $k \leq n$ on a graph. In the scattering problem on a graph, $k$ robots need to uniformly deploy over $n$ nodes in the graph. Several papers studied the scattering problem on graphs; e.g., on rings \cite{EB11,SMOKM16} and on grids \cite{BFMS11}, but in different settings. 

Most of the above algorithms for graph exploration or scattering on graphs are deterministic. To the best of our knowledge, our paper is the first presenting randomized solutions to dispersion and improve the previous results.


A slightly different way of looking at the dispersion problem is as load balancing in graphs. Load balancing requires nodes to distribute the load over nodes evenly. Here, if we consider robots as the load, then dispersion of robots is similar to load balancing, where the power to move load around the graph lies with the load as opposed to the nodes. Load balancing is a well explored problem, in particular in graphs~\cite{BV86,Cybenko89,MGS98,PV89,SS94}. 
 Our model is closer to diffusion based load balancing~\cite{Cybenko89,MGS98,SS94} with discrete loads~\cite{BV86,PV89}.


\subsection{Organization of Paper}
The rest of the paper is organized as follows. In Section~\ref{sec:prelims}, we introduce the technical preliminaries needed for our results. In Section~\ref{sec:local-leader-election}, we present an important primitive, \emph{Local-Leader-Election}, which we use extensively in our algorithms. In Sections~\ref{sec:rooted-ring},~\ref{sec:rooted-tree}, and~\ref{sec:rooted-graph}, we present our algorithms to achieve dispersion on rooted rings, rooted trees, and rooted graphs respectively. In Section~\ref{sec:arb-graph}, we present a simple memory optimal algorithm to achieve dispersion on arbitrary graphs. 
We show how to extend our algorithms to handle dispersion with an arbitrary number of robots in Section~\ref{sec:arb-k}. The lower bound on memory per robot is presented in Section~\ref{sec:lower-bounds}.  Finally, in Section~\ref{sec:conc}, we present conclusions and some future work.


\section{Technical Preliminaries}
\label{sec:prelims}

We consider a connected undirected graph of $n$ nodes, $m$ edges, diameter $D$, and maximum degree of any node $\Delta$. The nodes are anonymous, i.e. they do not have unique labels. For every edge connected to a node, the node has a corresponding port number for the edge. The same edge may have different port number assigned to it at each of its attached nodes. For every node, there exists a total ordering on the port numbers from that node. 
A robot with $x$ bits of memory has access to $2^x$ ports of that node. For a given node with $y$ ports, if $2^x \geq y$, then the robot has access to all ports of the node. When $2^x < y$, the robot only has access to a subset of the ports, where the exact subset of ports is chosen arbitrarily by nature.\footnote{The robot's memory restricts it to only use a subset of the available ports when determining which port to move through. Importantly, the robot does not know that there are more ports than it is seeing. Thus it cannot purposely choose which subset of ports to see. We call this lack of control  ``by nature".} 
Thus, for a given node with $\Delta$ ports, any robot needs at least $\log \Delta$ bits of memory in order to access all ports.\footnote{This does not necessarily give a $\Omega(\log \Delta)$ memory lower bound for dispersion. We discuss this further in the lower bound section (Section~\ref{sec:lower-bounds}).}

We assume a synchronous system, i.e. time progresses in rounds, and each robot knows when the current round ends and a new round starts, although robots may not know the round number. Each round proceeds as follows: (i) First, robots colocated at the node exchange messages with each other and perform local computation. (ii) Second, robots move through a port of current node and reach a new node. Robots may also choose to stay at the current node. Note that in step (i) of the round, we consider local computation and message exchange to be bounded, but free\footnote{This can be considered a realistic assumption when the time taken for a robot to move through an edge is significantly more than the time taken for either local message exchange or local computation.}.

Robots are anonymous, i.e. they do not have unique labels. Each robot has a limited amount of memory for computation and to store information. The exact limit depends on the algorithm to be run and is explicitly given in each section of the paper. Each robot has access to a fair coin that be used to generate an infinite number of random bits. However, the number of random bits that can be stored and used for any purpose is limited by the robots memory. When a robot is present at a node, it can access the port numbers  of that node, subject to memory restrictions as defined earlier. The robot can only view other robots colocated at the same node as it and cannot see anything beyond its current node (its ``view" of the graph is limited). Robots do not know the value of $n$, $m$, $D$, or $\Delta$. Note that our algorithms require do not require robots to know the actual values of $D$ and $\Delta$, but require the robots to have enough memory store either $O(\log D)$, $O(\log \Delta)$, or $O(\Delta)$ bits according to the algorithm in question. Thus robots may have an upper bound on those values but do not explicitly know those values. 

We characterize the efficiency of solutions to dispersion along two metrics. First, how many bits of memory is each robot required to have. Second, what is the running time of the algorithm until dispersion is achieved. For all algorithms, save the algorithm in Section~\ref{sec:arb-graph}, robots execute the dispersion algorithm and then terminate within the running time we specify for the algorithm. For the algorithm in Section~\ref{sec:arb-graph}, we allow robots to be active indefinitely.

We now present several definitions of terms we use in the paper. We call a graph a \textit{rooted graph} if all robots are initially placed at one node of that graph called its root. A similar definition applies for specific types of graphs such as rings and trees. We say that a robot \textit{settles at a given node} if that robot chooses to stay at that node in the final dispersion configuration. We call a node with a robot that settles on it a \textit{settled node}. The algorithm in Section~\ref{sec:arb-graph} is based on random walks. A {\em simple random walk} in an undirected graph is defined as: in each step, the walk chooses a random adjacent edge from the current node and moves to that neighbor. The probability of choosing a random neighbor $u$ from the current node $v$ is $1/d(v)$, where $d(v)$ is the degree of $v$. The {\em cover time} of a random walk is defined to be the time required by the random walk to visit all the nodes in the graph. It is known that the cover time of any graph is bounded by $O(mn)$ \cite{AKLLR79}. We refer the reader to the survey \cite{LL93} for details on random walks and cover time. 

We now formally define dispersion of $k$ robots on an $n$ node graph. Initially, $k$ robots are arbitrarily placed on the graph. Dispersion asks that robots move around the graph to reach a configuration such that at most $\lceil k/n \rceil$ robots are present at any given node.

\section{Local Leader Election}
\label{sec:local-leader-election}

In this section, we describe a procedure which we use throughout the rest of the paper. The procedure allows any number of $k$ robots co-located at an unsettled node to choose exactly one leader (robot) for the node within one round of an algorithm. Importantly, each robot only requires $O(1)$ bits of memory and access to a random number generator in order to execute the algorithm. We first describe the algorithm and prove our claims on it. We subsequently discuss the applications of the algorithm that immediately arise. In order to differentiate between different instances of communication occurring between robots at a node within the same round, we refer to each instance of communication among at most $k$ robots as one sub-round and use that terminology while describing our algorithm. Note that, as per our model assumptions, any amount of communication is allowed to take place between robots within a single round of the system, so long as the amount of communication is bounded. As we shall see, our procedure satisfies that requirement with high probability.

\subsection{Algorithm Local-Leader-Election} Each robot starts off as a candidate for leader. In every sub-round, a robot that is a candidate leader flips a fair coin. If heads, it broadcasts that it is alive to other robots. If tails (and at least one other robot broadcasts in that sub-round) it stops being a candidate for leader and doesn't broadcast anymore. If tails and no other robot broadcasts in that sub-round, it remains a candidate for leader. This process is repeated until exactly one robot broadcasts in a given sub-round. Then all robots know that that robot is the leader, and it is chosen as the robot which settles down at the given node. Subsequently, all other robots can then move to other nodes according to a given algorithm. Note that if it occurs that no robot broadcasts in a given sub-round, that sub-round is ignored and all robots that were still alive previously broadcast again.

\begin{theorem}
\emph{Local-Leader-Election} can be run by multiple robots, each having $O(1)$ bits of memory and co-located at a node, to select a common leader within one round of the system.
\end{theorem}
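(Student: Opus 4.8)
The plan is to establish three things about \emph{Local-Leader-Election}: (i) it terminates with probability $1$, and upon termination all robots agree on a single leader; (ii) each robot needs only $O(1)$ bits of state; and (iii) with high probability it halts within $O(\log k)$ sub-rounds, so that all of its communication fits inside step (i) of a single round, as the model allows.

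For correctness, I would track $j$, the number of robots still eligible to be leader at the start of a sub-round, and case on the number $h$ of those that flip heads. If $h=0$ the sub-round is ignored and $j$ is unchanged; if $h=1$, that robot broadcasts alone, is declared leader, and every robot (remaining candidates and those that have already dropped out, all of whom listen) observes this and agrees; if $h \ge 2$, the $h$ heads-robots stay eligible, every tails-robot drops out, and $j$ becomes $h$. Hence $j$ is non-increasing, and from any state with $j \ge 1$ the probability of hitting $h=1$ in a given sub-round is at least $2^{-j} > 0$, so with probability $1$ the procedure reaches a sub-round with exactly one broadcast; that sub-round elects a unique, commonly known leader.

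For the memory bound, note that between sub-rounds a robot stores only one bit (still eligible or not), and within a sub-round it needs the outcome of its own coin flip plus a count of the broadcasts it hears, which it may cap at two since it only ever needs to distinguish ``$0$'', ``exactly $1$'', and ``$\ge 2$''. As local computation and message handling are free in the model, this is $O(1)$ bits, independent of $k$ and $\Delta$.

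The crux is the sub-round count. I would introduce the potential $Z_i$ equal to $j$ if the procedure has not yet terminated by the start of sub-round $i$, and equal to $0$ once it has. A direct case analysis of the transition above---in which the corrections from the ``all-tails'' event ($h=0$, $j$ unchanged) and the ``all-heads'' event ($h=j$, $j$ unchanged) cancel against the omitted $h=1$ and $h=j$ terms of $\mathbb{E}[h]$---shows that $\mathbb{E}[Z_{i+1}\mid Z_i] = Z_i/2$ for every value of $Z_i$, including $Z_i=1$. Therefore $\mathbb{E}[Z_i] = k/2^{i-1}$, and Markov's inequality gives $\Pr[Z_i \ge 1] \le k/2^{i-1}$, so for $i = O(\log k)$ the procedure has terminated except with probability $k^{-\Omega(1)}$; in particular the number of sub-rounds, hence the total communication, is bounded with high probability and the whole election happens within one round of the system. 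The main thing to get right is the corner-case bookkeeping---ignored sub-rounds, the all-heads case, and the single-candidate state---that underlies both the termination argument and the exact ``halving in expectation'' identity.
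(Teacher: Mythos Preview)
Your proof is correct and in fact more carefully argued than the paper's. The paper's own proof is very brief: it asserts without details that the expected number of sub-rounds is $O(\log n)$, then appeals to a Chernoff bound to get $O(\log n)$ sub-rounds with high probability, and gives the same $O(1)$-bit accounting for memory (candidate flag, leader flag, and a broadcast counter capped at two).

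The substantive difference is in how you bound the number of sub-rounds. Instead of invoking a Chernoff-type concentration argument, you introduce the potential $Z_i$ and prove the exact identity $\mathbb{E}[Z_{i+1}\mid Z_i]=Z_i/2$, then apply Markov's inequality. This is more elementary and fully self-contained; your case analysis (the cancellation between the all-tails and all-heads corrections and the omitted $h=1$ term) is correct, including at $Z_i=1$. You also parameterize by $k$ rather than $n$, which is slightly more precise. What the paper's route buys is brevity; what yours buys is rigor and transparency, since it avoids having to justify why a Chernoff bound applies to this particular non-i.i.d.\ process. On the modeling side, you inherit the same mild slack as the paper: the communication is only bounded with high probability, not deterministically, which is exactly the standard the paper adopts.
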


\begin{proof}
We first show that the algorithm can be run by robots and completes within one round of the system. Then we argue about its memory complexity.

It is easy to that \emph{Local-Leader-Election} takes $O(\log n)$ sub-rounds on expectation for a leader to be chosen. Applying a Chernoff bound, it is also easy to see that it takes $O(\log n)$ sub-rounds with high probability for termination. Thus the number of sub-rounds is bounded with high probability. Recall that any amount of bounded communication between co-located robots is allowed in one round of the system. Thus the algorithm successfully executes within one round of the system.

Each robot requires only $O(1)$ bits because a robot needs $1$ bit to check if it's a candidate leader, $1$ bit to check if it's the leader, and $2$ bits to check if it heard $0, 1,$ or more than $1$ robot broadcast in a given sub-round. \qed
\end{proof}

\subsection{Applications}
\emph{Local-Leader-Election} can be directly applied to past deterministic algorithms in \cite{AM18} to replace the use of $O(\log n)$ bits to compare multiple robots to decide which one settles at a node. In addition, if the termination condition is relaxed, some of the resulting algorithms use dramatically less memory as a result. We list the improvements to algorithms in \cite{AM18}, as a result of these two modifications, below:
\begin{enumerate}
\item Algorithm \emph{Path-Ring-Tree-LogN} achieves dispersion of $n$ robots in $O(n)$ rounds on paths, trees, and rings when robots have $O(\log \Delta)$ bits of memory and do not terminate.
\item Algorithm \emph{Rooted-Graph-LogN} achieves dispersion of $n$ robots in $O(m)$ rounds on rooted graphs when robots have $O(\log \Delta)$ bits of memory and do not terminate.
\end{enumerate}

\section{Dispersion on Rooted Rings}
\label{sec:rooted-ring}

In this section, we describe our algorithm to achieve dispersion of $n$ robots on a rooted ring in $O(n)$ rounds when robots have $O(1)$ bits of memory. This does not contradict our lower bound because here $\Delta$ is a constant so $O(\log \Delta) = O(1)$. Recall that for a ring, any algorithm to achieve dispersion takes at least $\Omega(n)$ rounds because the diameter of the graph is $n/2$. Thus our algorithm is asymptotically time optimal.

\subsection{Algorithm Rooted-Ring} Each robot performs a traversal of the ring in a deterministic manner until it becomes the leader of the node it is at, as chosen in \emph{Local-Leader-Election}. Once it becomes the leader of that node, it settles down and terminates execution of the algorithm. 

The traversal of the ring is done in the following manner. Initially have robots move through port $0$. Subsequently, if the robot enters a node through port $i$ and it does not become the leader, have it leave through port $i+1 \mod 2$.

\begin{theorem}
Algorithm \emph{Rooted-Ring} can be run by $n$ robots with $O(1)$ bits of memory each to ensure dispersion occurs in $O(n)$ rounds on rooted rings.
\end{theorem}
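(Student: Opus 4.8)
The plan is to argue correctness and the time bound separately, leaning on the two facts already in hand: (i) each invocation of \emph{Local-Leader-Election} at an unsettled node completes within a single round of the system and selects exactly one leader, using $O(1)$ bits per robot (by the theorem in Section~\ref{sec:local-leader-election}); and (ii) on a ring, $\Delta = 2$, so the deterministic traversal rule ``enter through port $i$, leave through port $i+1 \bmod 2$'' is well-defined with $O(1)$ memory, and in fact all the traversing robots move in lockstep along the ring in a fixed direction (they never reverse, since the port-flip rule forces a consistent orientation after the first step through port $0$).

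First I would establish a loop invariant describing the configuration at the start of each round: the settled nodes form a contiguous arc of the ring starting from the root's neighbor (or the root itself), and all unsettled robots are colocated at the single ``frontier'' node — the first unsettled node encountered along the traversal direction. The base case is round $0$: all $n$ robots sit at the root, which is the frontier. For the inductive step, at the frontier node the colocated robots run \emph{Local-Leader-Election} (one round), exactly one settles and terminates, and the remaining robots all leave through the port dictated by the traversal rule. Because the ring is a single cycle and the traversal direction is consistent, they all arrive together at the next node along the arc, which becomes the new frontier; the settled arc has grown by one. I would note the one subtlety: the new frontier might already be settled if the traversal ``wraps around,'' but a counting argument rules this out — after $t$ frontier-steps exactly $t$ robots have settled and $t$ nodes occupied, so as long as fewer than $n$ robots have settled, the frontier node is fresh. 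After $n$ such steps every node has exactly one robot and every robot has terminated, giving dispersion.

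For the running time: each round of the system handles one frontier node — one invocation of \emph{Local-Leader-Election} (which fits in one round by fact (i)) plus one synchronous move. There are exactly $n$ frontier nodes to process, so dispersion is achieved in $O(n)$ rounds. The memory claim is immediate: a robot needs the $O(1)$ bits for \emph{Local-Leader-Election}, plus $1$ bit to remember the parity of the port it last entered through (equivalently, which port to leave by), plus an $O(1)$-bit status flag (traversing / settled); since $\Delta = 2$, it has access to both ports. Hence $O(1)$ bits suffice.

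The main obstacle I expect is not any single hard estimate but pinning down the invariant cleanly — in particular verifying that the traversing robots genuinely stay colocated round after round (so that \emph{Local-Leader-Election} is always run by the whole remaining group at one node, never by fragmented subgroups) and that the deterministic port rule produces a consistent orientation so the frontier advances monotonically around the cycle without ever landing on an already-settled node before all $n$ are placed. Once the invariant is stated precisely, correctness and the $O(n)$ bound both fall out by induction on the round number.
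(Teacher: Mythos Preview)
Your proposal is correct and follows the same approach as the paper: both argue that the traversing robots move together around the ring, settling one node per round via \emph{Local-Leader-Election}, with $O(1)$ memory for the port bit and the election state. Your version is simply more careful---the paper dispatches correctness with ``it is easy to see'' and does not spell out the colocation invariant or the wrap-around counting argument that you make explicit.
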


\begin{proof}
It is easy to see that the entire ring is traversed by robots in $O(n)$ rounds. Further, each robot terminates as soon as it becomes a leader. Finally, each node has exactly one leader robot assigned to it. Thus dispersion is achieved in $O(n)$ rounds. The only memory requirement is a bit to remember which port the robot entered the node through and $O(1)$ bits to execute \emph{Local-Leader-Election}. Thus each robot only requires $O(1)$ bits of memory. \qed
\end{proof}

\section{Dispersion on Rooted Trees}
\label{sec:rooted-tree}

In this section, we describe an algorithm to achieve dispersion of $n$ robots on a rooted tree in $O(n)$ rounds when each robot has $O(\log \Delta)$ bits of memory.
\vspace{-0.3cm}
\subsection{Algorithm Rooted-Tree}\label{subsec:rooted-tree-alg}
The algorithm has two phases of execution. In the first phase, the algorithm has every robot perform a deterministic depth first search (DFS) in order to uniquely settle down at a node. In the second phase, the final robot to settle down then backtracks to the root of the tree and performs a second DFS to inform each robot to terminate execution.

In the first phase, each robot that does not settle down performs a DFS in the following manner. It remembers the port $i$ that it entered the node through. It then leaves through port $(i+1) \mod \delta$, where $\delta$ is the local degree of the node. Initially at the root, let the robots move through port $0$ (since the robots did not initially enter the root through any node). At each empty node, a robot is chosen by \emph{Local-Leader-Election} to settle down at it. This node remembers the port it entered the node through and we call the port the \textit{parent pointer}.

In the second phase, the final robot to settle, $x$, changes its status to reflect that phase two has begun and backtracks to the root of the tree and then performs a second DFS until it finally settles down again and terminates. The robots it comes in contact with terminate when a special condition is met, as defined below. Consider the set of nodes on the path from the node where the last robot settled to the root and call it $R$. Let $S$ represent the set of all nodes in the graph. For every node $u\in S \setminus R$, the robot at $u$ terminates execution when $x$ leaves $u$ through $u$'s parent pointer. For each node $u\in R$, we have the robot at $u$ remember the port that $x$ backtracked through to reach the root, i.e. the port $x$ entered node $u$ through as it backtracked, and we call that port the \textit{pointer to final node}.\footnote{It is possible for the robot at $u$ to differentiate $x$ from just another robot executing phase one of the DFS because $x$ has changed its status to reflect that the second phase has begun.} Once $x$ passes through the pointer to final node of $u$, $u$ terminates execution. Finally, when $x$ reaches its empty node, it settles down and terminates execution.

\begin{theorem}\label{the:rooted-tree}
Algorithm \emph{Rooted-Tree} can be run by $n$ robots with $O(\log \Delta)$ bits of memory each to ensure dispersion occurs in $O(n)$ rounds on rooted trees.
\end{theorem}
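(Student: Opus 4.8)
The plan is to establish the three claims of the theorem separately: correctness of the final configuration, the $O(n)$ time bound, and the $O(\log\Delta)$ memory bound.

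\textbf{Phase~1.} First I would prove the key invariant that throughout phase~1 every not-yet-settled robot is co-located with all other not-yet-settled robots, i.e. they form a single ``cloud''. This holds initially (all robots start at the root) and is preserved at each node because every unsettled robot entered through the same port $i$ and hence leaves through the same port $(i+1)\bmod\delta$; when the node is empty, \emph{Local-Leader-Election} (which by its theorem completes within a single round) removes exactly one robot from the cloud. Next I would invoke the standard fact that the rule ``enter via port $p$, leave via port $(p+1)\bmod\delta$'', started at the root via port $0$, performs a depth-first traversal of the tree that visits every node and crosses every edge exactly twice, and so discovers all $n$ nodes within $2(n-1)$ rounds. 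Since the cloud loses exactly one robot at each newly discovered node and never at an already-settled node (a settled robot is still active in phase~1 and signals occupancy), once all $n$ nodes are discovered every node holds exactly one settled robot — dispersion. The last robot to settle, $x$, sits at the node $v_{\mathrm{last}}$ discovered last, which must be a leaf (a DFS visits a node before any of its children), and $x$ knows it is last because it is the sole robot in the cloud when it settles, whereupon it switches to phase~2.

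\textbf{Phase~2.} Then I would argue that every robot terminates and dispersion is restored within $O(n)$ further rounds. Each settled robot recorded the port it was entered through, which in a DFS is the edge to its parent, so following parent pointers carries $x$ from $v_{\mathrm{last}}$ to the root in at most $D\le n-1$ rounds ($x$ detects the root by its flag). From the root, $x$ re-runs the same DFS rule; the unique empty node is $v_{\mathrm{last}}$, so $x$'s traversal halts exactly when it re-reaches $v_{\mathrm{last}}$, where it re-settles and dispersion is restored, and since $v_{\mathrm{last}}$ is last in DFS order this second traversal again takes at most $2(n-1)$ rounds. Along the way I would check that each node terminates exactly once and at a legitimate time: a node $u\notin R$ is first reached by $x$ during the second DFS and terminates when $x$ leaves it through its parent pointer, which is $x$'s last visit to $u$; a node $u\in R$ cannot use that rule, because in the second DFS $x$ never exits $u$ toward the root (after finishing $u$'s other subtrees it descends toward $v_{\mathrm{last}}$ and never returns), so instead $u$ terminates when $x$ crosses the recorded ``pointer to final node'', which is precisely that last visit.

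\textbf{Memory.} For the memory bound: a moving robot stores only its current entry port ($\lceil\log\Delta\rceil$ bits) plus $O(1)$ bits of status and \emph{Local-Leader-Election} state; a settled robot stores its parent pointer, a ``root?'' flag, a phase flag, and possibly a ``pointer to final node'', all within $O(\log\Delta)$ bits; and with $\Theta(\log\Delta)$ bits a robot can address all $\le\Delta$ ports of any node and read the local degree, with all $\bmod\,\delta$ arithmetic being free local computation. Combining the three parts yields the theorem.

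\textbf{Main obstacle.} I expect the delicate part to be the phase~2 bookkeeping — verifying that the two termination rules ($u\in R$ versus $u\notin R$) together make every robot terminate exactly once and never prematurely, and in particular that no node of $R$ fires a parent-pointer rule during $x$'s backtrack. This rests on the structural observation that $v_{\mathrm{last}}$ is the ``rightmost'' leaf: at every node on the root-to-$v_{\mathrm{last}}$ path the edge toward $v_{\mathrm{last}}$ is the one the DFS explores last, so in the second DFS each node of $R$ is exited toward $v_{\mathrm{last}}$ only after all its other subtrees are done, which coincides with $x$'s final visit there. The cloud-stays-together invariant and the $2(n-1)$-step tree-DFS bound are then routine.
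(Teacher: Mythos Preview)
Your proposal is correct and follows the same two-phase argument as the paper. Where the paper establishes termination safety for nodes in $S\setminus R$ via an inductive lemma on subtree depth (Lemma~\ref{lem:subtree-execution}), you invoke the equivalent DFS last-visit property directly and additionally make explicit the ``rightmost leaf'' structure of $v_{\mathrm{last}}$ that the paper leaves implicit when handling nodes in $R$; your phase-1 cloud invariant is likewise stated more carefully than in the paper, but the overall route is the same.
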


\begin{proof}
We first prove that the execution of \emph{Rooted-Tree} results in dispersion being achieved and all robots terminating within $O(n)$ rounds of the start of the algorithm. Subsequently, we argue that each robot requires only $O(\log \Delta)$ bits of memory. 

It is clear that the first phase results in robots performing a DFS until an empty node is found to settle down in. Thus at the end of phase one, there is exactly one robot on each node. Note that we use notation from the algorithm. We now show that in phase two, due to the careful way we trigger termination of the algorithm in robots, all robots will terminate at the end of the DFS of $x$ and $x$ will end up back at the node it originally settled at. We first show this for robots at the nodes in the set $S \setminus R$ and then for the remaining robots.

\begin{lemma}\label{lem:subtree-execution}
For every node $u \in S \setminus R$, the robot in every node in the subtree rooted at $u$ terminates execution before the robot in $u$ terminates execution.
\end{lemma}

\begin{proof}
Consider only the subtree rooted at $u$ and let $S' \subseteq S \setminus R$ denote the set of nodes of the subtree including $u$. Let the maximum depth of any node in this subtree be $d$. $u$ is at depth $0$ in this subtree. Now, we prove by induction on the depths of nodes in the subtree that for any node $v \in S'$ at depth $d'$, the following hypothesis holds. The robots in all descendant nodes of $v$ have terminated before the robot in $v$ terminates.

For a node at depth $d$, i.e. a leaf node, the hypothesis holds trivially because it has no descendants. 
Let the hypothesis hold true for all nodes at some depth $d'$ in the subtree. We now prove that it holds for all nodes at depth $d'-1$. Consider a node $w \in S'$ at depth $d'-1$. Before moving through the parent pointer of $w$, $x$ will have explored each child of $w$ and moved through that child's robot's corresponding parent pointer. Thus the robot at each child is triggered to end termination of the algorithm before the robot at $w$ is triggered to end execution. Thus the invariant holds true for $w$. \qed
\end{proof} 

For every node $u \in R$ excluding the final node that $x$ settles down at, it is clear to see that once $x$ passes through $u$'s pointer to the final node, $u$ will not be visited again. So the robot at $u$ can terminate without a problem. And finally, once $x$ reaches its empty node, it will also terminate execution. Thus, we see that for all nodes in $S$, after $x$ completes its DFS in phase two, all robots at those nodes will terminate execution. The time taken to perform two DFS's on a tree and have $x$ move to the root from a settled node is $O(n)$ rounds. Thus the algorithm successfully completes in $O(n)$ rounds.

Regarding the memory requirements of each robot, $O(\log \Delta)$ bits are required for a parent pointer and pointer to the final node. $O(1)$ bits are required to remember which phase the robot is in, to denote whether the robot is in the root of the tree, to denote whether a robot becomes the exploring robot $x$, and to perform \emph{Local-Leader-Election}. So totally, each robot requires $O(\log \Delta)$ bits of memory. \qed
\end{proof}

\vspace{-0.5cm}
\section{Dispersion on Rooted Graphs}
\label{sec:rooted-graph}
 
In this section we describe two algorithms to achieve dispersion of $n$ robots on a rooted graph in $O(m)$ rounds. One algorithm requires robots to have $O(\max\lbrace \log \Delta, \log  D\rbrace )$ bits of memory each while the other has a requirement of $O(\Delta)$ bits of memory.
\vspace{-0.3cm}
\subsection{Algorithm Rooted-Graph-LogDelta-LogD}
This algorithm can be thought of as an extension to the algorithm \emph{Rooted-Tree} found in Section~\ref{subsec:rooted-tree-alg}. Similar to that algorithm, \emph{Rooted-Graph-LogDelta-LogD} proceeds in two phases. In the first phase, robots again perform a deterministic DFS in order to find nodes to settle down at. However, the key difference between this algorithm and \emph{Rooted-Tree} lies in how this algorithm deals with cycles in the graph. In the second phase, again the last settled robot goes to the root and performs a second DFS, triggering other robots to terminate execution of the algorithm in the process. 

In the first phase, robots again perform a DFS by remembering the port $i$ they entered the node through and subsequently leaving through port $(i+1) \mod \delta$, where $\delta$ is the local degree of the port. For the root node, robots initially move through port $0$. At each empty node along the way, robots perform \emph{Local-Leader-Election} to choose one robot to settle down at that node. If a robot is exploring and comes across a node with a robot already settled on it, the exploring robot backtracks to its previous node and tries the next port from that node. Finally, the last robot settles down at the last node, marking the end of phase one. Note that settled robots maintain a parent pointer which records the port through which they first entered the given node. For the root, the value of its parent pointer is null. Also, each robot (both settled and exploring) maintain a counter indicating its distance from the root with respect to the tree of nodes formed by the DFS.

In the second phase, the last robot to settle down, $x$, changes its state info to indicate that it is in phase two and makes its way to the root of the graph. From here it performs a DFS, similar to that done in phase one. However, in this DFS, cycles are detected when the robot moves from a node at distance $\ell$ from the root to a node at distance $< \ell$ from the root. In such a case, the robot backtracks to its previous node and proceeds through the next available port. Similar to \emph{Rooted-Tree}, consider all nodes along the path from the root to the node that $x$ finally settles at. Call this set $R$. Let every robot belonging to a node in $R$ except for $x$ maintain a pointer to the final node, indicating the port through which $x$ must go to return to the node it must settle at. Once $x$ moves through this port in the course of the DFS, all robots belonging to nodes in $R$ save $x$ itself terminate execution. Let the set of all nodes in the graph be $S$. For a robot belonging to a node $u \in S \setminus R$, once $x$ passes from $u$ to $u$'s parent via $u$'s parent pointer, the robot at $u$ terminates execution of the algorithm. Finally, $x$ completes the DFS, returns to the node it must settle at, and terminates execution.

\begin{theorem}\label{the:rooted-graph-logdelta-logd}
Algorithm \emph{Rooted-Graph-LogDelta-LogD} can be run by $n$ robots with $O(\max\lbrace \log \Delta, \log  D\rbrace )$ bits of memory each to ensure dispersion occurs in $O(m)$ rounds on rooted graphs.
\end{theorem}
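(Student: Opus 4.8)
The plan is to follow the three-part structure of the proof of Theorem~\ref{the:rooted-tree}: first establish that phase one disperses the robots in $O(m)$ rounds, then that the phase-two cascade terminates every robot (with $x$ ending at its own node) within a further $O(m)$ rounds, and finally tally the per-robot memory. The main conceptually new ingredient relative to the tree case is the treatment of non-tree edges: in phase one such an edge is recognised because its far endpoint is already settled, whereas in phase two every node is settled, so $x$ must instead consult the distance counters.

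For phase one I would first argue that the still-exploring pack performs an ordinary depth-first traversal of the graph: the rule ``enter through port $i$, leave through port $(i+1)\bmod\delta$'', combined with ``retreat to the previous node whenever the current node is already settled'', is exactly a DFS, and such a traversal crosses every edge a constant number of times, so phase one lasts $O(m)$ rounds, where each instance of \emph{Local-Leader-Election} is charged to a single system round (with high probability) by its stated guarantee. Since the graph is connected and one robot is dropped at the first visit to each node, phase one ends with exactly one robot per node; the parent pointers then encode a rooted spanning tree $T$ (the root's parent pointer being null), and the distance counters record each node's depth in $T$. Two facts that will be used in phase two: the last node $z$ to be first-visited is a leaf of $T$ (all of its neighbours are already settled when it is reached), and for every proper ancestor $u$ of $z$ in $T$, the child of $u$ lying on the root--$z$ path is $u$'s last child in the DFS order (else a later child of $u$ would be discovered after $z$, contradicting that $z$ is discovered last).

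For phase two, $x$ first climbs to the root along parent pointers --- at most $\mathrm{depth}(T)=O(n)=O(m)$ steps --- and then re-runs the DFS. The heart of the argument is the invariant that \emph{$x$'s phase-two walk is a faithful DFS of $T$}: it first-visits each node exactly once, in the same order as phase one, and never follows an edge out of $T$. This reduces to showing that $x$ can classify every edge it crosses: from a node at depth $\ell$, a tree edge to a child reaches a node at depth $\ell+1$ whose parent pointer points back along that edge, whereas any non-tree edge of an undirected-graph DFS joins the current node to an ancestor (far depth $<\ell$) or to a descendant (far depth $\ge\ell+2$) --- in either case the far endpoint's recorded depth is not $\ell+1$, so $x$ recognises the non-tree edge and backtracks. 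Granting faithfulness, the termination cascade is the exact analogue of Lemma~\ref{lem:subtree-execution}, proved by the same induction on depth: within the subtree rooted at any $u\in S\setminus R$, every robot strictly below $u$ is triggered --- by $x$ backtracking through its parent pointer --- before the robot at $u$ is; each $u\in R$ other than $z$ is triggered exactly when $x$ crosses $u$'s pointer to the final node, which (as that child is $u$'s last child) happens only after $x$ has cleared every non-$R$ subtree hanging from $u$, and after which $x$ never returns to $u$; and $x$ itself terminates on reaching $z$, the empty leaf it vacated, at which point every other robot has already been triggered. The second DFS again costs $O(m)$ rounds, so the algorithm finishes in $O(m)$ rounds.

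For the memory bound, each robot stores a parent pointer and a pointer to the final node, each a port index of $O(\log\Delta)$ bits; a depth-from-root counter of $O(\log D)$ bits; and $O(1)$ bits for its phase flag, its root flag, the flag recording whether it is the explorer $x$, and \emph{Local-Leader-Election}. The total is $O(\max\{\log\Delta,\log D\})$ bits. I expect the main obstacle to be the faithfulness invariant of phase two: one must rule out that a non-tree edge --- in particular a ``forward'' edge to a deep descendant --- ever misleads $x$ into re-entering an already-processed subtree, which would break both correctness and the $O(m)$ running time. The lever is the structural fact that a DFS of an undirected graph has only ancestor--descendant non-tree edges, so that comparing the two depth counters at the ends of any edge $x$ crosses unambiguously reveals whether the edge belongs to $T$.
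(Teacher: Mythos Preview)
Your proof follows the same two-phase decomposition as the paper and invokes the analogue of Lemma~\ref{lem:subtree-execution} for the termination cascade, so the overall strategy matches. You are in fact more careful than the paper in one respect: you explicitly distinguish the case where a non-tree edge, viewed from $x$'s current node $u$ at depth $\ell$, leads to a \emph{descendant} at depth $\geq\ell+2$ rather than to an ancestor. The paper's proof asserts without justification that any non-tree edge encountered by $x$ in phase two must lead to an ancestor.

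However, your resolution of precisely this forward-edge case has a gap. Suppose $u$ has a non-tree edge via port $p$ to a descendant $v$ lying in the subtree of some earlier child of $u$ reached via port $q$, where $q$ precedes $p$ in the cyclic DFS order at $u$. By the time $x$ tries port $p$ in phase two, the entire $q$-subtree has already been fully processed, and in particular the robot at $v$ has already terminated (it was triggered when $x$ backtracked through $v$'s parent pointer). So when $x$ crosses to $v$ there is no robot there to report a depth or a parent pointer, and your lever --- ``comparing the two depth counters at the ends of any edge $x$ crosses'' --- cannot be applied. You therefore need to specify what $x$ does upon reaching an empty node: it must backtrack, treating emptiness itself as a non-tree-edge signal, \emph{except} when it has just crossed the pointer-to-final-node port of a node in $R$, in which case the empty node is $z$ and $x$ should settle. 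Without this case distinction, either $x$ mis-settles at some already-vacated descendant, or (if you have it backtrack from every empty node) $x$ backtracks from $z$ as well and never terminates correctly. Your closing paragraph correctly flags this as the main obstacle, but the structural fact you cite does not by itself close it.
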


\begin{proof}
This proof is very similar to the proof of Theorem~\ref{the:rooted-tree}. We first prove that all robots running \emph{Rooted-Graph-LogDelta-LogD} achieve dispersion in $O(m)$ rounds. Then we argue that each robot requires $O(\max\lbrace \log \Delta, \log  D\rbrace )$ bits of memory to execute the algorithm.

To prove our claim on dispersion, we make use of the following Claim. We omit the proof as the Claim and its proof are very similar to Lemma~\ref{lem:subtree-execution} and its proof.

\begin{claim}\label{claim:subtree-execution-rootedgraph1}
For every node $u \in S \setminus R$, the robot in every node in the subtree rooted at $u$ terminates execution before the robot at $u$ terminates execution.
\end{claim}

Note that the proof of Claim~\ref{claim:subtree-execution-rootedgraph1} requires one extra argument in addition to the argument required in the proof of Lemma~\ref{lem:subtree-execution}. We must show that $x$ will immediately backtrack when a cycle is detected and not trigger the termination condition of a robot out of order of the DFS by accidentally further exploring through that robot's parent pointer. Our cycle checking strategy requires robots to maintain their distance from the root. $x$ can easily identify that it is in a cycle if it moves from a node at distance $\ell$ to one at distance $< \ell$. Importantly, in a DFS, cross edges do not exist but only forward and back edges. This means that $x$ moved to an ancestor of the node. $x$ would not have yet moved through the parent pointer of the robot attached to the ancestor, and thus would not have triggered that robot to terminate execution. So, $x$ can communicate with the robot at the ancestor, discover the distance of the ancestor from the root, discover that it is in a cycle, and backtrack immediately.

Thus, at the end of phase two, $x$ will successfully complete execution of the DFS. At the end of the DFS, all robots would be triggered to terminate execution. Furthermore, we know that a DFS on a graph takes $O(m)$ rounds, so that is the execution time of the algorithm.

As to the memory requirement of robots. Each robot must use $O(\log \Delta)$ bits to store information about parent pointer and pointer to final node. Additionally, $O(\log D)$ bits are required to store information about the distance to root. Finally, $O(1)$ bits are needed to store information about which phase a robot is in, whether the robot is in the root or not, whether the robot is the exploring robot $x$ or not, and to execute \emph{Local-Leader-Election}. Therefore, each robot requires $O(\max\lbrace \log \Delta, \log  D\rbrace )$ bits of memory to execute the algorithm. \qed
\end{proof}

\subsection{Algorithm Rooted-Graph-Delta}
This algorithm is a variation of \emph{Rooted-Graph-LogDelta-LogD} that provides a memory trade-off: instead of needing $O(\log D)$ bits of memory, this algorithm requires $O(\Delta)$ bits of memory. This algorithm again runs in two phases, with the goals of each phase being the same as that of the previous algorithm. We focus only on the variation between the two algorithms now.

In the previous algorithm, each settled robot was required to remember its distance from the root. In this algorithm, instead we require each settled robot to remember which of its ports lead to forward edges and which do not. This is done by maintaining a bit string of size at most $\Delta$ bits where each bit from LSB to MSB corresponds to one of the ports leading out of that node. Let us call it \textit{list of forward ports}. Initially all bits except parent pointer's bit are set to one to indicate that all those ports possibly lead to forward edges. In the course of the DFS in phase one, if a robot uses a port from a node $u$ and then realizes it is in a cycle, the robot will backtrack to $u$, inform the robot at $u$ about the given port, and then move on with the DFS. The robot at $u$ sets the corresponding bit in the list of forward ports to zero. Thus at the end of phase one, all settled robots have an accurate list of forward ports. Now, in phase two, when $x$ performs its DFS, at a given node it only considers those ports whose corresponding bit in the list of forward ports is one.

\begin{theorem}
Algorithm \emph{Rooted-Graph-Delta} can be run by $n$ robots with $O(\Delta)$ bits of memory each to ensure dispersion occurs in $O(m)$ rounds on rooted graphs.
\end{theorem}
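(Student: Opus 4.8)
The plan is to follow the proof of Theorem~\ref{the:rooted-graph-logdelta-logd} almost verbatim, since \emph{Rooted-Graph-Delta} differs from \emph{Rooted-Graph-LogDelta-LogD} only in how cycles are recorded in phase one and, consequently, in how $x$ skips non-tree edges in phase two. Phase one is an ordinary depth-first search of the graph: exploring robots use the port rule $(i+1)\bmod\delta$, \emph{Local-Leader-Election} settles one robot at each newly discovered node, an exploring robot that steps onto an already-settled node backtracks, and the settled robots record parent pointers along the way. Exactly as before, this yields one robot per node (dispersion), and since a DFS traverses each edge a constant number of times it costs $O(m)$ rounds.

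The one genuinely new ingredient is the correctness of the \emph{list of forward ports}: I would prove the claim that, at the end of phase one, for every settled node $u$ the bits set to one are precisely the ports of $u$ leading to a child of $u$ in the DFS tree. The parent-pointer bit starts at zero and is never modified, since the parent port is used only for the final backtracking step and never triggers a cycle report. For any other port $p$ of $u$, leading to a node $w$: when the phase-one DFS processes $p$ from $u$'s side --- and every port of every node is processed, by the cyclic port rule together with the traversal running to completion --- either $w$ is still unsettled, in which case $w$ becomes a child of $u$ and the bit correctly stays one, or $w$ is already settled, in which case the exploring robot backtracks to $u$, reports port $p$, and the bit is correctly cleared. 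Here I would invoke the standard structural fact that an undirected DFS produces only tree edges and back edges (no cross or forward edges), so any port leading to an already-settled node is a back edge and hence not a tree-child edge. The only delicacy is that an exploring robot must distinguish in phase one between reaching a settled node while exploring forward (a cycle: backtrack and clear the bit) and returning to its parent after exhausting all ports (ordinary backtracking); this is handled because a robot tracks which of its $\delta$ ports it is currently trying and uses the entering port only last.

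Given this claim, phase two becomes a clean depth-first traversal of the DFS \emph{tree}: $x$ first walks to the root along parent pointers, and then runs a DFS in which, at each node, it descends only through ports whose forward-port bit is one and backtracks through the parent pointer. Because $x$ never touches a back edge, it visits nodes in exactly DFS order and can never reach a node out of turn; in particular it cannot prematurely trigger a robot's termination by entering that robot's node through its parent pointer, so the cross-edge caveat needed in the proof of Theorem~\ref{the:rooted-graph-logdelta-logd} is vacuous here. The analogue of Claim~\ref{claim:subtree-execution-rootedgraph1} therefore holds with the same induction on tree depth, the robots at nodes of $R$ terminate correctly when $x$ passes their pointer to the final node, and $x$ ends back at its own node, all within the $O(n)$ extra rounds a tree traversal costs; the total is $O(m)+O(n)=O(m)$ rounds. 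For memory, the list of forward ports uses at most $\Delta$ bits, the parent pointer and pointer to the final node use only $O(\log\Delta)$ bits, and a further $O(1)$ bits cover the phase flag, the root flag, the $x$ flag, and \emph{Local-Leader-Election}, for a total of $O(\Delta)$ bits per robot.

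I expect the main obstacle to be the forward-ports correctness claim --- specifically, confirming that every non-tree edge really does get its bit cleared from $u$'s side during phase one (which leans on completeness of the phase-one traversal and the no-cross-edges property of undirected DFS), and that an exploring robot reliably tells cycle detection apart from plain parent-backtracking.
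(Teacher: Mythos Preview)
Your proposal is correct and follows essentially the same approach as the paper, which gives only a brief proof sketch deferring to Theorem~\ref{the:rooted-graph-logdelta-logd}. You supply more detail than the paper does---in particular the explicit argument that the list of forward ports is correctly built in phase one and the observation that phase two is really an $O(n)$ tree traversal---but these are elaborations of exactly what the paper asserts without proof.
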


\begin{proof}[Proof Sketch]
The proof of this theorem is identical to that of Theorem~\ref{the:rooted-graph-logdelta-logd}, so we omit details. However, we focus on two things: the proof that $x$ does not end up in a cycle during the phase two DFS and the memory requirements of robots.

In phase two of the algorithm, $x$ does not need to traverse an edge to discover if it is a back edge. Instead, at a given node $u$, it needs only rely on the list of forward ports maintained by the robot at $u$, which will be properly built in phase one of the algorithm. Therefore, the second DFS will be successful and all robots will terminate execution of the algorithm in $O(m)$ rounds.

Instead of using $O(\log D)$ bits of memory to remember the distance from root, each robot is required to maintain the list of forward ports, which is a bit string of size at most $\Delta$. Thus each robot requires $O(\Delta)$ bits of memory. \qed
\end{proof}

\vspace{-0.3cm}
\section{Dispersion on Arbitrary Graphs (without Termination)}
\label{sec:arb-graph}
 
\vspace{-0.3cm}
In this section, we assume that the robots are initially arbitrarily located at nodes in the graph (i.e., not necessarily at a single node). We describe a simple randomized algorithm that can be run by robots to achieve dispersion and requires each robot to have $O(\log \Delta)$ bits of memory each. The algorithm is a Las Vegas style randomized algorithm in that the time until dispersion is achieved is variable and is bounded by the cover time of the graph with high probability . 

\vspace{-0.3cm}
\subsection{Algorithm Arbitrary-Graph}
The idea of the algorithm is that each robot performs a simple random walk on the graph in parallel until it finds an empty node that it can settle at. The algorithm is described below in more detail.
 Each robot performs a random walk on the graph in parallel. During the random walk, if a robot finds an empty node, it settles down in that node. If multiple robots are present at an empty node in the same round, they compute a local leader using the procedure \emph{Local-Leader-Election} and the leader settles at that node. Each robot performs the random walk until it settles down. A settled robot stays active indefinitely since it must inform other exploring robots about the occupancy of the node.

\begin{theorem}\label{thm:main-arb-graph}
Suppose $n$ robots are placed arbitrarily over an $n$ node graph. Then Algorithm \emph{Arbitrary-Graph} solves dispersion with $O(\log \Delta)$ bits of memory in cover time of the graph with high probability. The robots are active indefinitely.   
\end{theorem}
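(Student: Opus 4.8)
The plan is to verify three things in turn: that each robot uses only $O(\log \Delta)$ bits, that the final configuration has exactly one robot per node (which for $k=n$ is precisely dispersion, since $\lceil k/n \rceil = 1$), and that this configuration is reached within the cover time of the graph with high probability. The memory bound is the routine part. Under the model of Section~\ref{sec:prelims}, $O(\log \Delta)$ bits suffice for a robot to see \emph{all} ports of every node it visits, to name and move through any one of them, and to draw a uniformly random port for one step of a simple random walk (e.g.\ by rejection sampling with $\lceil \log \Delta \rceil$ fresh random bits per attempt, needing only $O(\log \Delta)$ bits of working storage and terminating in $O(1)$ attempts in expectation); an extra $O(1)$ bits hold a settled/unsettled flag and run \emph{Local-Leader-Election} (Section~\ref{sec:local-leader-election}). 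That a robot with $O(\log \Delta)$ bits sees every port, rather than a nature-chosen subset, matters here: it guarantees that each unsettled robot is performing a genuine simple random walk on $G$, which is what the cover-time argument below needs.

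For correctness and the time bound, the key structural fact is that being settled is a permanent, monotone property of a node: once a robot settles at $v$ it stays there forever and keeps the node occupied, so no second robot ever settles at $v$. Hence at most one robot ever settles per node, and it remains only to show that, with high probability and within the cover time, \emph{every} robot settles. I would prove this by contradiction. Fix $\tau$ large enough that a single simple random walk on $G$ from any start vertex has visited all $n$ nodes by round $\tau$ with probability at least $1 - n^{-2}$; by Markov's inequality applied over consecutive blocks this holds for $\tau = O(C_G \log n)$, where $C_G$ is the (expected) cover time of $G$, so in particular $\tau = O(mn \log n)$. Suppose some robot $r$ is still unsettled at round $\tau$. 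Then $r$ took a random step in each of rounds $1,\dots,\tau$ (an unsettled robot always walks, including when it reaches an empty node but loses the local leader election there), so conditioned on $r$ remaining unsettled its trajectory is an honest length-$\tau$ random walk and, w.h.p., has visited every node. Since $r$ is unsettled, fewer than $n$ robots and hence fewer than $n$ nodes are settled, so some node $v$ is unsettled at round $\tau$; by monotonicity $v$ was unsettled at every earlier round too. But $r$ visited $v$ at some round $t \le \tau$, at which moment $v$ carried no settled robot, so by the algorithm $r$ together with any other unsettled robots then at $v$ runs \emph{Local-Leader-Election} and one of them settles at $v$ in round $t$ --- contradicting that $v$ is unsettled at round $\tau \ge t$. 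A union bound over the $n$ robots shows that w.h.p.\ no robot is unsettled at round $\tau$; since all $n$ robots then settle, each at a distinct node, exactly one robot occupies each of the $n$ nodes. The claim that robots stay active indefinitely is immediate from the algorithm, since a settled robot must remain to tell arriving robots the node is taken.

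The point needing most care is making the coupling between a robot's in-algorithm motion and an honest simple random walk airtight: one must check that the port chosen by an unsettled robot is always uniform and independent of the past exactly as a random-walk step is, in particular that losing a \emph{Local-Leader-Election} neither biases nor stalls the walk, and that the $O(\log \Delta)$-bit budget genuinely exposes all $\Delta$ ports (otherwise the robot would walk on a nature-chosen subgraph that need not even be connected, destroying the cover-time guarantee). A secondary, cosmetic gap is between ``cover time'' in its usual sense (an expectation) and the ``with high probability'' phrasing of the theorem: as above this costs a $\Theta(\log n)$ factor via Markov plus the union bound over $n$ robots, consistent with the $[\Omega(n\log n),\, O(mn)]$ range quoted in Table~\ref{table:results} up to this logarithmic slack. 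If a sharper bound is wanted one can instead track the potential ``number of unsettled robots ($=$ number of unsettled nodes)'' and note it strictly decreases whenever \emph{any} walking robot first reaches an unsettled node; but the contradiction argument above already gives the theorem as stated.
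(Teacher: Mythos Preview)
Your proof is correct and follows essentially the same line as the paper's: $O(\log\Delta)$ bits suffice to execute a simple random walk and \emph{Local-Leader-Election}, each robot's walk covers the graph by the cover time w.h.p., and hence every robot settles. You spell out explicitly the pigeonhole/monotonicity step (an unsettled robot implies an unsettled node, yet the covering walk must have visited it and triggered a settlement there) that the paper leaves implicit in its ``Hence'', and you are more honest about the $\Theta(\log n)$ slack coming from Markov plus the union bound, but the underlying argument is the same.
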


\begin{proof}
First of all, a robot can perform a simple random walk with $O(\log \Delta)$ bits of memory as there are at most $\Delta$ adjacent edges at any node. Hence the robot can pick a random adjacent edge with $O(\log \Delta)$ bits of memory (i.e., it can generate a random number from $1$ to $\Delta$ with $O(\log \Delta)$ bits of memory). 

Consider a particular robot exploring the graph by performing a random walk. Since the random walk is independent of all the other robots' random walks, the robot visits all the nodes in the graph by the cover time of the graph with high probability (see the definition of the cover time in Section~\ref{sec:prelims}). Hence, with high probability, by at most the cover time of the graph, the robot settles at some node. Since every robot performs a random walk in parallel (until it settles down) and independently, every robot will settle down after the cover time of the graph with high probability. Hence dispersion is achieved in cover time of the graph with high probability. The cover time of a graph lies in the range between $\Omega(n \log n)$ and $O(mn)$ depending on the graph structure. 

Whenever a robot settles at some node, the robot has to stay active until all the robots settle down. This is because the settled robot needs to inform other exploring robots that the node is already occupied; otherwise multiple robots may be settled down at a single node and dispersion will not be achieved. Since it takes cover time of the graph until all robots settle down, and the cover time of any graph is $\Omega(n\log n)$ \cite{DA89,BK89}, a robot cannot maintain a counter to count the rounds until cover time is achieved with only $O(\log \Delta)$ bits of memory and hence cannot terminate after the cover time. Thus, all the settled robots need to stay active for an indefinite number of rounds in order to achieve dispersion.    \qed  
\end{proof}

Note that the random walk based exploration algorithm outperforms the $O(m)$ time algorithms in several graph classes. For example, consider regular expander graphs. The cover time of a regular expander graph is $\Theta (n\log n)$ \cite{BK89}. However, in dense regular expander graphs, the number of edges is $\omega (n\log n)$ and it could be as high as $O(n^2)$ (e.g., a complete graph). In fact, the  \emph{Arbitrary-Graph} algorithm is asymptotically faster than deterministic algorithms with more memory from \cite{AM18} in the graphs where $m=\omega(\text{cover-time})$. However, the algorithm is non-terminating. At the same time, \emph{Arbitrary-Graph} requires robots to only have  $O(\log \Delta)$ bits of memory. Moreover, since the random walks (corresponding to the robots) are independent of each other, the algorithm also works in an asynchronous system\footnote{It is required that \emph{Local-Leader-Election} works without issue in that setting. This is the case.}.   

\section{Extending Algorithms to Arbitrary $k$}
\label{sec:arb-k}

In this section, we describe how to extend our previous algorithms to work with an arbitrary number of robots. That is, we want to achieve dispersion of $k$ robots on $n$ nodes, for any positive integer value of $k$. There are two difficulties inherent in extending results to an arbitrary $k$, depending on whether $k < n$ or $k>n$. When $k<n$, if an algorithm relies on a certain condition to be met before robots terminate, we must ensure that this condition is still met even with less than $n$ robots participating. When $k>n$, we must figure out how to have robots settle in stages, because we do not want to  maintain a counter to allow $\lceil k /n \rceil$ robots to settle at each node because the memory requirement will be $O(\log k)$, which could be arbitrarily large.

\emph{Rooted-Ring} works for any arbitrary $k$. At the end of a given round, one robot has settled and terminated execution of the algorithm and the remaining robots are colocated at the next node. It takes one round to settle one robot, and the ring is settled node by node. Thus, for $k$ robots, we can achieve dispersion with \emph{Rooted-Ring} in $O(k)$ rounds where each robot requires $O(\log \Delta)$ bits of memory.

\emph{Rooted-Tree} works without any modifications when $k<n$. This is because the last robot to settle down initiates phase two of the algorithm. It is easy for a robot to detect this because no other robot will participate in \emph{Local-Leader-Election} with it. Since only $k$ nodes of the graph are explored in phase one or two, \emph{Rooted-Tree} only takes $O(k)$ rounds to complete. When $k>n$, we have robots perform a second check to see if it is the last robot to settle. In a complete DFS traversal of a tree, a robot will reach the last node of the traversal when it has traversed the last port of each node from the root to a leaf node\footnote{Since ports are ordered, each robot can determine which port is last in the order. Furthermore, this ordering is unique to each node, so different robots will see the same ordering at each node.}. Thus, if a robot performing the DFS maintains a flag that indicates whether the robot traversed this path, we can detect the final node in the DFS traversal. Specifically, the robot sets the flag to true when at the root, and changes it to false if it traverses a port other than the last port of a given node. If the robot reaches a leaf node and the flag is true, and the robot is selected by \emph{Local-Leader-Election} to settle down at the leaf node, then it knows that it is settling down at the last empty node in the tree. All other robots that got to this node but did not settle down go to the root node and wait. The last robot to settle down executes phase two of the algorithm as usual. Meanwhile, the robots waiting at the root node will execute a new iteration of the algorithm, i.e. start phase one, once the exploring robot in phase two passes through the final port of the root. This delay in start time  guarantees that the exploring robot will trigger other robots in the tree to terminate execution in time for the robots executing phase one to re-populate the tree with settled robots. The above process is repeated $\lfloor k/n \rfloor$ times and takes $O(n)$ rounds for each repetition and with an additional repetition done by some $k - n*\lfloor k/n \rfloor$ robots that takes $O(k - n*\lfloor k/n \rfloor)$ rounds, for a total of $O(k)$ rounds to achieve dispersion of all robots. Thus, for any $k$, \emph{Rooted-Tree} achieves dispersion in $O(k)$ rounds and requires robots to have $O(\log \Delta)$ bits of memory each.

\emph{Rooted-Graph-LogDelta-LogD} and \emph{Rooted-Graph-Delta} both work when $k<n$ without any modifications. When $k>n$, both work using the extension described in the previous paragraph on \emph{Rooted-Tree}. However, since both algorithms take $O(m)$ rounds to settle $n$ robots, the total running times are both $O(mk/n)$. However, an interesting change to the \emph{Rooted-Graph-Delta} algorithm can ensure a running time of $O(m + k)$. The change is to add an extra bit to each robot that indicates if they are participating in an ``even" or ``odd" set of phases of the algorithm. Order the robots in sets of $n$ robots referencing the set of phases in which they are settled and terminate execution of the algorithm. If the set is an odd (even) number set in this order, it is called odd (even) set. Now, for every set of phases $a$ and $b$ where $a$ immediately precedes $b$ in the order, order of phases is as follows. First, phase one of set $a$ occurs, then phase one of set $b$, then phase two of set $a$, then phase two of set $b$.\footnote{For $3$ sets of phases $a$, $b$, and $c$, sequence is: phase one of $a$, phase one of $b$, phase two of $a$, phase one of $c$, phase two of $b$, phase two of $c$. This sequence can be easily seen now for more sets.} A node can be identified as a possible soon to be empty node (in the sense that the settled robot terminates execution) if only a robot with a different phase (odd vs. even) robot is settled at it. Thus in phase one of set $b$, have each settled robot copy the values of the at most $\Delta$ bits of the previous set's robot at that node indicating which ports lead to back edges. Thus, only in phase one of the first set of nodes will the DFS take $O(m)$ rounds, and subsequent DFS's take only $O(n)$ rounds each for the remaining $O(k/n)$ sets of nodes (the algorithm is slightly tweaked so that future sets of robots take advantage of this info). Thus the running time becomes $O(m + k)$ rounds. 

\emph{Arbitrary-Graph} works without any modifications when $k<n$. When $k>n$, in general the algorithm will not work because robots never terminate execution of the algorithm. Thus, unlike the previous algorithms, we cannot have robots work in stages where the first set of robots settles down and terminates, then the next set, and so on. Instead, we would need a counter of $O(\lceil k/n \rceil)$ bits to count how many robots have already settled at a node and settle down (if chosen by \emph{Local-Leader-Election}) if that counter is $< \lceil k/n \rceil$. This leads to the following special case where dispersion is possible. When $n < k \leq \Delta^c n$, where $c$ is a positive constant, we can achieve dispersion using \emph{Arbitrary-Graph}, modified with a counter as earlier described, in the cover time of the graph with high probability where each robot needs $O(\log \Delta)$ bits of memory.

\section{Lower Bound on Memory}
\label{sec:lower-bounds}

In \cite{AM18}, they showed that, assuming all robots have the same memory, a lower bound of $\Omega(\log n)$ bits is required for dispersion when considering deterministic algorithms. The bound resulted from an argument that robots needed enough bits to uniquely choose a robot from a set of robots at each node. However, as we see later in this paper, we are able to circumvent this with the use of randomness. Now we argue another lower bound in the presence of randomness. 

\begin{theorem}
Consider $k$ robots trying to achieve dispersion on an $n$ node graph. Assuming all robots have the same amount of memory, robots require $\Omega(\log \Delta)$ bits of memory each for any randomized algorithm to achieve dispersion on any graph.
\end{theorem}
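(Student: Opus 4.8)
The plan is to exhibit a single, concrete graph on which \emph{no} algorithm — deterministic or randomized, terminating or not — can achieve dispersion unless every robot has at least $\lceil \log \Delta \rceil$ bits of memory. Take $G$ to be the star $S_\Delta$ with a center $c$ and $\Delta$ leaves $\ell_1,\dots,\ell_\Delta$, so $n = \Delta+1$, and place all $n$ robots at $c$; since $k=n$, dispersion here means exactly one robot on every node. Suppose for contradiction that each robot has only $x$ bits of memory with $2^x < \Delta$, i.e. $x < \log\Delta$. By the model, such a robot sees only $2^x$ of the $\Delta$ ports at $c$, and the visible subset is chosen by nature; acting as the adversary that establishes the lower bound, I let nature fix one and the same subset $P \subseteq \{1,\dots,\Delta\}$ of size $2^x$ for every robot and for every round. (Pinning down that this restriction is consistent over time and across robots is exactly the subtlety flagged in Section~\ref{sec:prelims}, and it is what makes the argument go through rather than the naive ``a robot needs $\log\Delta$ bits to see all ports of a degree-$\Delta$ node'' — which, as the complete graph shows, does not by itself imply a dispersion lower bound.)

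The key step is that the leaves whose $c$-side port lies outside $P$ are permanently unreachable. The only edge incident to a leaf $\ell_i$ is the one joining $\ell_i$ to $c$, so a robot can occupy $\ell_i$ only by leaving $c$ through the corresponding port of $c$; but that port is invisible to every robot in every round, so it can never be traversed. Hence the $\Delta - 2^x \ge 1$ such leaves are never visited, and in any reachable configuration the robots occupy at most the $2^x$ leaves with ports in $P$ together with (possibly) $c$, i.e. at most $2^x + 1 \le \Delta$ nodes. Since there are $n = \Delta+1$ robots, the pigeonhole principle forces some node to hold at least two of them, violating the requirement of $\lceil k/n\rceil = 1$ robot per node. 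This conclusion holds in every execution and for every outcome of the robots' coin flips, so no randomized algorithm (Monte Carlo or Las Vegas, terminating or not) can succeed with $x < \log\Delta$ bits; therefore $x \ge \lceil\log\Delta\rceil = \Omega(\log\Delta)$ bits are necessary.

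Finally, I would note that the bound is not an artifact of choosing $\Delta = n-1$: to obtain a family where $\Delta$ is an arbitrary parameter independent of $n$, take $n/(\Delta+1)$ disjoint copies of $S_\Delta$ (with minor adjustments to the leaf counts so that the maximum degree stays exactly $\Delta$), connect their centers along a path to keep the graph connected, and start $\Delta+1$ robots at each center. Letting nature consistently hide all but $2^x$ leaf-ports at each center, the number of \emph{globally} reachable nodes is at most $\big(n/(\Delta+1)\big)\,(2^x+1) < n$ whenever $2^x < \Delta$, so the same counting argument (which automatically accounts for robots migrating between star regions) again forces a collision. The main obstacle in writing this up cleanly is therefore not any calculation but the modeling point above: making precise that nature's port restriction is a fixed, time-invariant property of the graph as seen by low-memory robots, after which the impossibility of dispersion is immediate and completely independent of the algorithm used.
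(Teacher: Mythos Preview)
Your proof is correct and follows essentially the same approach as the paper: both exploit the model's stipulation that a robot with $x$ bits sees only a nature-chosen subset of $2^x$ ports, and both let nature act adversarially to hide a port whose edge is a cut separating the robots from empty nodes. The paper phrases this abstractly (``for any algorithm, we can construct a graph such that there exists a node with associated cut edge\ldots''), whereas you instantiate it concretely with the star $S_\Delta$; your explicit treatment of the time-invariance and cross-robot consistency of nature's choice, and your extension decoupling $\Delta$ from $n$, are refinements the paper leaves implicit.
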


\begin{proof}
We first describe a situation where robots containing $o(\log \Delta)$ bits of memory will be unable to achieve dispersion. We then show that for all algorithms that attempt to achieve dispersion using $o(\log \Delta)$ bits of memory, we can arrive at this situation.

Consider any number of robots present at a given node with degree $O(\Delta)$. If each robot has $o(\log \Delta)$ bits of memory, it is impossible for any of them to individually access the entire list of possible ports to move through. Since the selection of ports by each of these robots is decided by nature, it may then occur that one particular port is never chosen in any of the subsets of ports. Let us focus on such a port.  

Let this port lead to an edge which acts as a cut between the set of nodes with robots currently on them and the set of empty nodes.  If $k < n$, additionally assume that the set of nodes with robots on them is of size $\leq k-1$. Thus, the robots being unable to traverse that port prevents dispersion from being achieved. For any given algorithm, we can construct a graph such that there exists a node with associated cut edge satisfying the above description and all the robots start on nodes on the side of the cut with the node. Thus, for any algorithm, dispersion is impossible when robots have $o(\log \Delta)$ bits of memory each. \qed
\end{proof}

\section{Conclusion and Future Work}
\label{sec:conc}

In this paper, we showed how to achieve dispersion on various types of graphs using less memory than required by other algorithms in the literature so far. Importantly, we showed how to leverage randomness in a novel way in the form of the \emph{Local-Leader-Election} algorithm and utilize this primitive to reduce memory requirements. There are several interesting lines of research that result from this paper. We present two open problems of interest below.

\textbf{Open Problem 1:} All algorithms in our paper, save \emph{Arbitrary-Graph}, work only for rooted versions of different types of graphs. The trade-off when implementing \emph{Arbitrary-Graph} is that robots then must stay active indefinitely. Is it possible to develop algorithms for non-rooted versions of the graphs in question without requiring robots to stay active indefinitely?

\textbf{Open Problem 2:} Our algorithms for rooted graphs require robots to have possibly $\omega(\log \Delta)$ bits of memory each, depending on the values of $\Delta$ and $D$. Is it possible to develop algorithms with tighter upper bounds for rooted graphs?



\bibliographystyle{splncs04}
\bibliography{references} 

\end{document}